\newtheorem{theorem}{Theorem}[section]
\newtheorem{corollary}[theorem]{Corollary}
\newtheorem{lemma}[theorem]{Lemma}
\theoremstyle{definition}
\newcommand{\dst}{\displaystyle}
\newcommand \tr {\mathrm{Tr}\:}
\newcommand{\Co}{\ensuremath{\mathbb{C}}}
\newcommand{\ac}{\ensuremath{\mathcal{A}}}
\newcommand{\bc}{\ensuremath{\mathcal{B}}}
\newcommand{\vb}{\ensuremath{\mathbf{b}}}
\newcommand{\vone}{\ensuremath{\mathbf{1}}}
\newcommand{\eb}{\ensuremath{\mathbf{e}}}
\newcommand{\fb}{\ensuremath{\mathbf{f}}}
\def \< {\langle}
\def \> {\rangle}
\newcommand{\abs}[1]{{\left|{#1}\right|}}
\newcommand{\scal}[1]{{\left\langle{#1}\right\rangle}}
\begin{document}

\title[A rigidity property of MUBs]{A rigidity property of complete systems of mutually unbiased bases}

\author[M. Matolcsi]{M\'at\'e Matolcsi}
\address{M.M.: Budapest University of Technology and Economics (BME),
H-1111, Egry J. u. 1, Budapest, Hungary (also at Alfr\'ed R\'enyi Institute of Mathematics,
Hungarian Academy of Sciences, H-1053, Realtanoda u 13-15, Budapest, Hungary)}
\email{matomate@renyi.hu}

\author[M. Weiner]{Mih\'aly Weiner}
\address{M.W.: Budapest University of Technology and Economics (BME), H-1111, Egry J. u. 1, Budapest, Hungary and MTA-BME Lend\"ulet Quantum Information Theory Research Group}
\email{mweiner@math.bme.hu}

\thanks{M. Matolcsi was supported by
NKFI grants K132097, K129335. M. Weiner was supported by the Bolyai Janos Fellowship of the Hungarian Academy of Sciences, the UNKP-20-5 New National Excellence Program of the Ministry for Innovation and Technology and by NKFI grants K132097, K124152 and KH129601.}

\begin{abstract}
Suppose that for some unit vectors $\vb_1,\ldots \vb_n$ in $\mathbb C^d$ we have that for any $j\neq k$ $\vb_j$ is either orthogonal to $\vb_k$ or $|\langle \vb_j,\vb_k\rangle|^2 = 1/d$ (i.e. $\vb_j$ and $\vb_k$ are unbiased). We prove that if $n=d(d+1)$, then these vectors necessarily form a complete system of mutually unbiased bases, that is, they can be arranged into $d+1$ orthonormal bases, all being mutually unbiased with respect to each other.
\end{abstract}

\maketitle

\bigskip

\section{Introduction}

The concept of mutually unbiased bases (MUBs) originates from quantum state tomography (\cite{ivanovic}), and appears also in several protocols in quantum  information theory (\cite{hal}). As such, the existence and explicit constructions of MUBs have been active areas of research in the past decades (see e.g. \cite{sum} for a recent comprehensive survey article).

\medskip

Recall that two orthonormal bases in $\Co^d$, $\ac=\{\eb_1,\ldots,\eb_d\}$ and $\bc=\{\fb_1,\ldots,\fb_d\}$ are called \emph{unbiased} if for every $1\leq j,k\leq d$, $\abs{\scal{\eb_j,\fb_k}}=\dst\frac{1}{\sqrt{d}}$. A collection $\bc_1,\ldots\bc_m$ of orthonormal bases is said to be (pairwise) \emph{mutually unbiased} if any two of them are unbiased. If the dimension $d$ is a prime-power, then the maximal number of MUBs is well-known to be $d+1$ (see e.g. \cite{ivanovic, WF,BBRV, KR}). It is also well-known that in any dimension $d$ the maximal number of MUBs is at most $d+1$ (see e.g. \cite{WF,belovs,mubfourier,sum}). For this reason, a set of $d+1$ mutually unbiased bases is commonly called a {\it complete system of MUBs}. However, for any $d$ which is not a prime-power, it is not known whether a complete system of MUBs exists (even for $d=6$, despite considerable efforts \cite{BBELTZ,boykin,config,arxiv}).

\medskip

In \cite[Theorem 8]{belovs} it is proved that unit vectors forming a complete system of MUBs, if they exist, must satisfy some extra algebraic relations. Furthermore, in \cite[Theorem 2.2]{mubfourier} the following result is proved: a collection of unit vectors in $\Co^d$, all of which are orthogonal or unbiased to a fixed orthonormal basis, can consist of at most $d^2$ vectors. These two results raise the following very general and natural question: given a set of $d(d+1)$ unit vectors in $\Co^d$ such that any two of them are {\it either} orthogonal {\it or} unbiased to each other, is it true that they necessarily form a complete system of MUBs?  In this paper we answer this question in the affirmative, which can be viewed as a certain rigidity property of complete systems of MUB's. This result is somewhat surprising, considering that as many as $(d-1)^2$ unit vectors in $\Co^d$ can be given such that they are pairwise unbiased to each other. Indeed, consider a SIC-POVM (which conjecturally exists in any dimension) in $\Co^{d-1}$, i.e. a collection of $(d-1)^2$ unit vectors in $\Co^{d-1}$ such that any pair has inner product with absolute value $\frac{1}{\sqrt{d}}$. Append each vector with a coordinate 0 in the $d$th coordinate, and you obtain a collection of $(d-1)^2$ unit vectors in $\Co^d$ which are pairwise unbiased to each other. One might expect that similar special constructions may yield $d(d+1)$ unit vectors in several different ways, such that they are all orthogonal {\it or} unbiased to each other, but Theorem \ref{main} tells us that this is not the case.

\section{From a set of vectors to a complete system of MUBs}

Suppose that $n=d(d+1)$ and $\mathbf b_1,\ldots \mathbf b_n \in \mathbb C^d$ is a collection of unit vectors such that any two of them is either orthogonal or unbiased to each other, that is $|\langle \vb_j, \vb_k\rangle|=0$ or $\frac{1}{\sqrt{d}}$ for any $j\ne k$. We will prove below (Theorem \ref{main}) that these vectors necessarily form a complete system of MUBs.

\medskip

Consider the simple graph $G=(V,E)$ with vertex set $V=\{\mathbf b_1,\ldots \mathbf b_n\}$ and edge set $E$ containing all (unordered) pairs of orthogonal vectors in $V$. In other words, we imagine that vectors $\vb_j$ and $\vb_k$ are connected by an edge if they are orthogonal to each other. Our aim is to prove that $G$ is a disjoint union of $d+1$ complete graphs, each containing $d$ vertices. This will prove that the vectors $\vb_j$ can be grouped into $d+1$ orthonormal bases, all being mutually unbiased to each other. We shall begin by considering the number of edges in $G$. Note that if the vectors in $V$ form $d+1$ mutually unbiased bases, then the number of orthogonality relations (i.e. the number of edges in $G$) should be $(d+1)\binom{d}{2}$.

\medskip

 The following is a well-known general fact, but we include it for the convenience of the reader.

\begin{lemma}
\label{lemma:rk}
Suppose $A$ is a self-adjoint matrix of rank $r={\rm rk}(A)$. Then $(\tr A)^2 \leq r\, \tr(A^2)$ with equality holding if and only if $A$ is a multiple of a projection.
\end{lemma}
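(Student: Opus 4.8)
The key statement is a Cauchy–Schwarz-type inequality about self-adjoint matrices. The natural plan is to diagonalize. Since $A$ is self-adjoint, write its nonzero eigenvalues as $\lambda_1, \ldots, \lambda_r$ (counted with multiplicity), where $r = \rank(A)$; the remaining eigenvalues are zero. Then $\tr A = \sum_{i=1}^r \lambda_i$ and $\tr(A^2) = \sum_{i=1}^r \lambda_i^2$. The inequality $(\tr A)^2 \le r\,\tr(A^2)$ becomes $\left(\sum_{i=1}^r \lambda_i\right)^2 \le r \sum_{i=1}^r \lambda_i^2$, which is precisely the Cauchy–Schwarz inequality applied to the vectors $(\lambda_1,\ldots,\lambda_r)$ and $(1,1,\ldots,1)$ in $\RR^r$ (equivalently, the power–mean inequality, or convexity of $t \mapsto t^2$).

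For the equality case, recall that equality in Cauchy–Schwarz holds if and only if the two vectors are proportional, i.e. $\lambda_1 = \lambda_2 = \cdots = \lambda_r = c$ for some scalar $c$. If $c = 0$ then $A = 0$, which is a (zero) multiple of a projection; if $c \neq 0$ then $A = c P$ where $P$ is the orthogonal projection onto the $r$-dimensional span of the eigenvectors belonging to the nonzero eigenvalues, since $A$ acts as multiplication by $c$ on that subspace and as $0$ on its orthogonal complement. Conversely, if $A = cP$ with $P$ a rank-$r$ projection, then $\tr A = cr$ and $\tr(A^2) = c^2 r$, so $(\tr A)^2 = c^2 r^2 = r\,\tr(A^2)$, giving equality.

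There is essentially no obstacle here: the only mild point to be careful about is the degenerate case $c = 0$ (so that "multiple of a projection" is read to include the zero matrix, or one restricts attention to $A \neq 0$), and the convention that eigenvalues are listed with multiplicity so that the Cauchy–Schwarz vector has exactly $r$ entries. Everything else is routine.
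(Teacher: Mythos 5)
Your proof is correct, including both directions of the equality case. It does, however, take a slightly different route from the paper: you invoke the spectral theorem to reduce the claim to the scalar Cauchy--Schwarz (equivalently power--mean) inequality for the vector of nonzero eigenvalues $(\lambda_1,\ldots,\lambda_r)$ against $(1,\ldots,1)$, whereas the paper works coordinate-free, taking $P$ to be the orthogonal projection onto the range of $A$, noting $\tr A=\tr(PA)$ and $\tr(P^2)=r$, and applying the Cauchy--Schwarz inequality for the Hilbert--Schmidt inner product, $|\tr(X^*Y)|^2\le\tr(X^*X)\,\tr(Y^*Y)$, with equality precisely when $A$ is parallel to $P$. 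The two arguments are the same inequality in disguise (diagonalizing the paper's argument recovers yours), but each has its own advantage: yours is more elementary once the spectral theorem is granted, and it makes the equality case completely transparent (all nonzero eigenvalues equal); the paper's avoids diagonalization altogether and delivers the equality condition directly in the operator form ``$A$ is a multiple of $P$,'' which is the form used later when analyzing the Gram matrix $\tilde K$. One cosmetic remark: in your equality discussion the case $c=0$ can only occur when $r=0$, i.e.\ $A=0$, since the $\lambda_i$ are by definition the nonzero eigenvalues; so your caveat is harmless but not really needed beyond noting that $A=0$ is trivially a multiple of a projection.
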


\begin{proof}
We may assume that the rank $r>0$ (the case $r=0$ implies $A=0$, which is trivial).
Let $P$ be the orthogonal projection onto the range space of $A$. Then $PA=A$ and $\tr(P^2)=\tr(P)={\rm rk}(P)={\rm rk}(A)=r$. Using the Cauchy-Schwarz inequality $|\tr(X^*Y)|^2\leq \tr(X^*X)\tr(Y^*Y)$ we have
$$
(\tr A)^2=(\tr PA)^2\leq\tr(P^2)\,\tr(A^2)=  r \tr(A^2)
$$
with equality holding if and only if $A$ and $P$ are parallel; i.e.\! when $A$ is a multiple of $P$.
\end{proof}
\begin{corollary}
\label{corr:ort_from_above}
The graph $G$ has at most $(d+1)\binom{d}{2}$ edges.
\end{corollary}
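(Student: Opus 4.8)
The plan is to apply Lemma~\ref{lemma:rk} to the operator $A=\sum_{j=1}^{n}P_j$ acting on $\Co^d$, where $P_j$ denotes the orthogonal projection onto the line spanned by $\vb_j$. First I would record the two relevant traces. Since each $P_j$ has trace $1$,
$$\tr A=\sum_{j=1}^{n}\tr P_j=n=d(d+1),$$
and, using $\tr(P_jP_k)=\abs{\scal{\vb_j,\vb_k}}^2$,
$$\tr(A^2)=\sum_{j,k=1}^{n}\tr(P_jP_k)=\sum_{j,k=1}^{n}\abs{\scal{\vb_j,\vb_k}}^2.$$
The $n$ diagonal terms each contribute $1$; an orthogonal off-diagonal pair contributes $0$ while an unbiased one contributes $1/d$. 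Writing $m=|E|$ for the number of edges of $G$, there are exactly $n(n-1)-2m$ ordered unbiased pairs, so $\tr(A^2)=n+\tfrac1d\bigl(n(n-1)-2m\bigr)$.

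Next, since $A$ is self-adjoint and acts on the $d$-dimensional space $\Co^d$, its rank satisfies $\rank(A)\le d$. Lemma~\ref{lemma:rk} then gives
$$n^2=(\tr A)^2\le \rank(A)\,\tr(A^2)\le d\,\tr(A^2)=dn+n(n-1)-2m.$$
Solving for $m$ and substituting $n=d(d+1)$ yields $2m\le dn+n(n-1)-n^2=n(d-1)=d(d+1)(d-1)$, that is, $m\le(d+1)\binom{d}{2}$, which is precisely the claimed bound.

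There is essentially no obstacle here: once the operator $A$ is written down, the argument is a short trace computation. The only points demanding a little care are the bookkeeping that distinguishes ordered from unordered pairs, and the elementary but crucial observation that $A$ lives on a $d$-dimensional space so that its rank cannot exceed $d$. It is worth noting already that equality $m=(d+1)\binom{d}{2}$ forces equality throughout the displayed chain; by the equality clause of Lemma~\ref{lemma:rk} this means $\rank(A)=d$ and $A$ is a multiple of a projection, hence $A=(d+1)I$ — a rigidity that one expects to be the starting point for the proof of Theorem~\ref{main}.
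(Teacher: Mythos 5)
Your proof is correct and is essentially the paper's own argument: the paper applies Lemma~\ref{lemma:rk} to the $n\times n$ Gram matrix $K=\left(\langle\vb_j,\vb_k\rangle\right)_{j,k}$, which has the same trace, the same value of $\tr(K^2)$, and the same rank bound $d$ as your frame operator $A=\sum_j P_j$ (they are $B^*B$ and $BB^*$ for the $d\times n$ matrix $B$ with columns $\vb_j$), so the two computations coincide term by term and yield the same bound $|E|\le(d+1)\binom{d}{2}$. Your closing observation about the equality case is sound, though the paper extracts its rigidity later from a different Gram matrix $\tilde K$ built from the traceless operators $X_j$.
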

\begin{proof}
We will denote the number of edges by $|E|$.
Consider the Gram matrix
$$
K := \left(\langle\mathbf b_j,\mathbf b_k\rangle\right)_{\{j,k\}}
$$
of the given vectors.
The rank of $K$ is the dimension of the subspace spanned by the vectors $\mathbf b_1,\ldots \mathbf b_n \in \mathbb C^d$ and hence ${\rm rk}(K)\leq d$. Since these vectors are of unit length, the diagonal elements of $K$ are all equal to $1$ and thus $\tr(K)=n=d(d+1)$. Moreover, as
$K$ is self-adjoint (actually: positive semidefinite),
$$
\tr(K^2)=\tr(K^*K)=\sum_{j,k}|K_{j,k}|^2
=\sum_{j,k}|\langle\mathbf b_j,\mathbf b_j\rangle|^2.
$$
In the above sum, we have $3$ kind of terms. First, the ones with $j=k$, of which we have $n=d(d+1)$ many. Second, the ones  corresponding to orthogonal pairs of vectors; of these we have $2|E|$ -- the factor of $2$ needed because we considered $G$ to be undirected. Finally, we have the ones corresponding to unbiased pairs of vectors; of these we have $2\left(\binom{n}{2} -|E|\right)$. So
$$
\tr(K^2)=n\cdot 1 + 2 |E|\cdot 0 + 2\left(\binom{n}{2} -|E|\right)\cdot \frac{1}{d} = n+\frac{n(n-1)}{d}-\frac{2 |E|}{d},
$$
and hence by the previous lemma
$$
n^2 \leq d \left(n+\frac{n(n-1)}{d}-\frac{2 |E|}{d}\right).
$$
Substituting $n=d(d+1)$ and rearranging we get $|E|\leq \frac{d(d^2-1)}{2}$, which is the claimed bound.
\end{proof}

To completely determine $|E|$, we also need to bound it from below.
This means bounding the number of non-orthogonal (i.e.\! unbiased) pairs from above.
More concretely, we need to show that using the vectors $\mathbf b_1,\ldots \mathbf b_n$, one can form at most $\binom{d+1}{2}d^2$ unbiased pairs; i.e.\! exactly as many as we would have if these vectors were to from a complete system of MUBs.

\medskip

To this end, for each $j\in\{1,\ldots n\}$ consider $Q_j:=|\mathbf b_j\rangle\langle \mathbf b_j|$, i.e.\! the orthogonal projection onto the one-dimensional subspace given by the vector $\mathbf b_j$, and let $X_j=Q_j-\frac{1}{d}I$. Elementary computation shows that the Hilbert-Schmidt inner products satisfy
\begin{equation}
\label{eq:<X_jX_k>}
\langle X_j,X_k\rangle_{HS}= \tr(X_j^* X_k)=
|\langle \mathbf b_j, \mathbf b_k\rangle|^2-\frac{1}{d},
\end{equation}
where $\langle\cdot,\cdot\rangle_{HS}$ denotes the usual Hilbert-Schmidt inner product on $M_d(\mathbb C)$. We shall now apply the estimate of Lemma \ref{lemma:rk} to the Gram matrix
$$
\tilde{K}:=\left(\langle X_j,X_k \rangle_{HS}\right)_{\{j,k\}}
$$
Note that $\tilde{K}$ has size $n\times n$.

\begin{lemma}
The graph $G$ has exactly $(d+1)\binom{d}{2}$ edges, and $\tilde{K}$ is an orthogonal projection of rank $d^2-1$.
\end{lemma}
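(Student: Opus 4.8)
The plan is to compute $\tr(\tilde K)$ and $\tr(\tilde K^2)$, apply Lemma~\ref{lemma:rk}, and observe that the rank constraint together with Corollary~\ref{corr:ort_from_above} forces equality. First I would record, using \eqref{eq:<X_jX_k>}, that the diagonal entries of $\tilde K$ are $\langle X_j,X_j\rangle_{HS}=|\langle\vb_j,\vb_j\rangle|^2-\tfrac1d=1-\tfrac1d$, so $\tr(\tilde K)=n(1-\tfrac1d)=d(d+1)\cdot\tfrac{d-1}{d}=(d+1)(d-1)=d^2-1$. Next, $\tilde K$ is itself a Gram matrix, namely of the vectors $X_1,\dots,X_n$ inside the real (or complex) inner-product space of self-adjoint matrices; hence $\tilde K$ is positive semidefinite and its rank equals $\dim\mathrm{span}\{X_1,\dots,X_n\}$. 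Since each $X_j$ is traceless, all the $X_j$ lie in the space of traceless self-adjoint $d\times d$ matrices, which has dimension $d^2-1$; therefore $\mathrm{rk}(\tilde K)\le d^2-1$.

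Now I would compute $\tr(\tilde K^2)=\sum_{j,k}|\tilde K_{j,k}|^2$. By \eqref{eq:<X_jX_k>} the entry is $0$ exactly for orthogonal pairs, and has squared modulus $(\tfrac1d-\tfrac1d)$? — no: for unbiased pairs $|\langle\vb_j,\vb_k\rangle|^2=\tfrac1d$, so $\tilde K_{j,k}=\tfrac1d-\tfrac1d=0$ as well. That is the key structural point: $\tilde K_{j,k}=0$ whenever $j\ne k$, regardless of whether the pair is orthogonal or unbiased, because in both cases $|\langle\vb_j,\vb_k\rangle|^2\in\{0,\tfrac1d\}$ and subtracting $\tfrac1d$ gives $-\tfrac1d$ or $0$ — wait, orthogonal gives $0-\tfrac1d=-\tfrac1d\ne0$. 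Let me restate cleanly: off-diagonal entries are $-\tfrac1d$ for orthogonal pairs and $0$ for unbiased pairs. So $\tr(\tilde K^2)=n(1-\tfrac1d)^2+2|E|\cdot\tfrac1{d^2}$, where $|E|$ is the number of edges (orthogonal pairs). Plugging $n=d(d+1)$, the diagonal contribution is $d(d+1)\cdot\tfrac{(d-1)^2}{d^2}=\tfrac{(d+1)(d-1)^2}{d}$.

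Lemma~\ref{lemma:rk} applied to $\tilde K$ with $r=\mathrm{rk}(\tilde K)\le d^2-1$ gives $(\tr\tilde K)^2\le r\,\tr(\tilde K^2)$, i.e.
$$
(d^2-1)^2 \le (d^2-1)\left(\frac{(d+1)(d-1)^2}{d}+\frac{2|E|}{d^2}\right),
$$
so $d^2-1 \le \tfrac{(d+1)(d-1)^2}{d}+\tfrac{2|E|}{d^2}$. Rearranging yields a lower bound on $|E|$; a short computation should give precisely $|E|\ge(d+1)\binom d2=\tfrac{d(d^2-1)}{2}$, matching the upper bound from Corollary~\ref{corr:ort_from_above}. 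Hence $|E|=(d+1)\binom d2$ exactly, and equality holds throughout: in particular $r=\mathrm{rk}(\tilde K)=d^2-1$ and $(\tr\tilde K)^2=r\,\tr(\tilde K^2)$, so by the equality case of Lemma~\ref{lemma:rk}, $\tilde K$ is a multiple of a projection, $\tilde K=cP$ with $\mathrm{rk}(P)=d^2-1$. Comparing traces, $c(d^2-1)=\tr\tilde K=d^2-1$, so $c=1$ and $\tilde K=P$ is an orthogonal projection of rank $d^2-1$.

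The main obstacle is purely arithmetic bookkeeping: one must be careful that the off-diagonal entry of $\tilde K$ is $-1/d$ for orthogonal pairs and $0$ for unbiased pairs (the opposite of the naive guess), and then verify that the inequality from Lemma~\ref{lemma:rk} collapses exactly — not merely approximately — to the bound $|E|\ge(d+1)\binom d2$ when $n=d(d+1)$. Checking this exact cancellation, and confirming that the rank is forced to be exactly $d^2-1$ (which needs the span of the $X_j$ to be all of the traceless self-adjoint matrices, a fact that follows a posteriori from the equality case), is where the one nontrivial simplification step lies.
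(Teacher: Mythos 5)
Your proposal is correct and follows essentially the same route as the paper: bound $\mathrm{rk}(\tilde K)\le d^2-1$ via tracelessness of the $X_j$, compute $\tr(\tilde K)$ and $\tr(\tilde K^2)$ from the entry values ($1-\tfrac1d$ on the diagonal, $-\tfrac1d$ for orthogonal pairs, $0$ for unbiased pairs), apply Lemma~\ref{lemma:rk} to get $|E|\ge(d+1)\binom d2$, and combine with Corollary~\ref{corr:ort_from_above} to force equality, whence the equality case and the trace comparison give $\tilde K=P$ of rank $d^2-1$. The arithmetic you left as ``a short computation'' does indeed collapse exactly to the stated bound, so there is no gap.
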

\begin{proof}
Since $\tr(X_j)=\tr(Q_j-\frac{1}{d}I)=1-(d/d)=0$ for all $j=1,\ldots n$, the span of $\{X_j|j=1,\ldots n\}$ is contained in the subspace of traceless $d\times d$ matrices; thus ${\rm rk}(\tilde{K})\leq d^2-1$. Moreover,
$$
\tr(\tilde{K})=\sum_j
\left(|\langle \mathbf b_j, \mathbf b_j\rangle|^2-\frac{1}{d}\right)
=n \left(1-\frac{1}{d}\right)
$$
and
$$
\tr(\tilde{K}^2)=\sum_{j,k}\left(|\langle \mathbf b_j, \mathbf b_k\rangle|^2-\frac{1}{d}\right)^2 \\
\nonumber
= n (1-\frac{1}{d})^2 + 2|E|\frac{1}{d^2}
$$
where we have used that by (\ref{eq:<X_jX_k>}), the diagonal entries of $\tilde{K}$ are equal to $1-1/d$, the entries corresponding to orthogonal pairs are equal to $0-1/d = -1/d$ and the entries corresponding to unbiased pairs are equal to $1/d-1/d=0$. Taking into account ${\rm rk}(\tilde{K})\leq d^2-1$, the application of Lemma \ref{lemma:rk} to the Gram matrix $\tilde{K}$ gives

\begin{equation}\label{in2}
n^2(1-\frac{1}{d})^2\le (d^2-1)(n (1-\frac{1}{d})^2 + 2|E|\frac{1}{d^2}).
\end{equation}

After substituting $n=d(d+1)$ and rearranging, we get $|E|\geq (d+1)\binom{d}{2}$. This, together with
Corollary \ref{corr:ort_from_above}, proves that this inequality is actually an equality. Therefore, by the equality case of Lemma \ref{lemma:rk}, the matrix $\tilde{K}$ is a multiple of a projection; $\tilde{K}=\lambda P$ for some scalar $\lambda$ and orthogonal projection $P$. Also, the inequality in \eqref{in2} must also be an equality, which implies ${\rm rk}(P)={\rm rk}(\tilde{K})=d^2-1$. Therefore
$$
n(1-\frac{1}{d})=\tr(\tilde{K})=\tr(\lambda P)=\lambda (d^2-1),
$$
implying that $\lambda=1$ and hence that $\tilde{K}=P$.
\end{proof}

Consider the $n\times n$ matrix $A:=(d-1)I-d \tilde{K}$. By what we know
about the entries of $\tilde{K}$, it is easy to verify that
$$
A_{j,k}=\left\{\begin{matrix}
1, & {\rm if}\; \mathbf b_j \perp \mathbf b_k,
\\
0, & {\rm otherwise};
\end{matrix}\right.
$$
i.e.\! $A$ is simply the adjacency matrix of $G$. Thus, by having established that $\tilde{K}$ is a rank $d^2-1$ projection, we can precisely determine the spectrum of the adjacency matrix $A$, or, as it
is called in short, the spectrum of the graph $G$.

\medskip

In general, the spectrum of a graph does not determine its isomorphism class. That is, there exist graphs which are not isomorphic, yet have the same spectrum (including multiplicities); a curious fact that was first noted more than half a century ago \cite{spectra}. However, in this particular case, we can prove that $G$ must be a disjoint union of $(d+1)$ complete graphs, each with $d$ vertices.

\begin{theorem}\label{main}
Let $n=d(d+1)$ and $\mathbf b_1\ldots \vb_n\in \mathbb C^d$ be a collection of unit vectors such that $|\langle\mathbf b_j,\mathbf b_k\rangle|^2$ is either $0$ or $1/d$ for any $j\neq k$ (i.e.\! such that any two of them are either orthogonal or unbiased to each other). Then the vectors $\mathbf b_1 \ldots \mathbf b_n$ can be arranged into $d+1$ orthogonal bases, all being mutually unbiased to each other.
\end{theorem}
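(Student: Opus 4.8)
The plan is to extract a hidden positive semidefiniteness from the identity $A=(d-1)I-d\tilde K$ proved above. Since $\tilde K$ is an orthogonal projection, so is $I-\tilde K$, and therefore
$$A+I=dI-d\tilde K=d(I-\tilde K)\succeq 0.$$
In other words the $0$–$1$ matrix $B:=A+I$, whose $(j,k)$ entry equals $1$ exactly when $j=k$ or $\vb_j\perp\vb_k$, is positive semidefinite with all diagonal entries $1$. This is the only genuinely new input; everything before it (the rank/trace estimates fixing $|E|=(d+1)\binom d2$ and forcing $\tilde K$ to be a rank $d^2-1$ projection) is already in hand.

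Next I would show that $G$ is a disjoint union of complete graphs. Every principal submatrix of a positive semidefinite matrix is positive semidefinite, yet $\left(\begin{smallmatrix}1&1&0\\1&1&1\\0&1&1\end{smallmatrix}\right)$ has determinant $-1$. Hence $B$ has no such principal submatrix; concretely, if $\vb_u\perp\vb_v$ and $\vb_v\perp\vb_w$ with $u\neq w$, then $\vb_u\perp\vb_w$ must also hold, for otherwise the principal submatrix of $B$ on $\{u,v,w\}$ would equal the forbidden matrix. Thus the relation ``$\vb_j\perp\vb_k$ or $j=k$'' is transitive; being also reflexive and symmetric, it is an equivalence relation, and its classes are the vertex sets of cliques that partition $G$. (Alternatively one could read off that the spectrum of $A$ consists of $d-1$ with multiplicity $d+1$ and $-1$ with multiplicity $d^2-1$, and invoke the classical fact that a graph with least eigenvalue $\geq -1$ is a disjoint union of cliques — proved by interlacing against an induced path on three vertices — but the semidefinite argument is more self-contained.)

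Finally I would pin down the clique sizes. Write the component sizes as $d_1,\dots,d_m$, so $\sum_i d_i=n=d(d+1)$. Each clique is a set of pairwise orthogonal unit vectors in $\mathbb C^d$, whence $d_i\leq d$ for all $i$. Counting edges of $G$ componentwise gives $\sum_i\binom{d_i}{2}=|E|=(d+1)\binom d2$, which rearranges to $\sum_i d_i^2=d^2(d+1)=d\sum_i d_i$. Combined with $d_i\leq d$, the equality $\sum_i d_i^2=d\sum_i d_i$ forces $d_i^2=d\,d_i$, i.e.\ $d_i=d$, for every $i$; hence $m=d+1$. So $G$ is the disjoint union of $d+1$ copies of $K_d$: the $d+1$ cliques are $d+1$ orthonormal bases of $\mathbb C^d$, and two vectors lying in different cliques are non-adjacent in $G$, hence not orthogonal, hence (by hypothesis) unbiased — so the bases are mutually unbiased, which is the assertion of Theorem \ref{main}.

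I do not expect a serious obstacle: the analytic heavy lifting is done in the preceding lemmas, and the remaining work is the combinatorial translation ``$A+I\succeq 0$ $\Rightarrow$ no induced $P_3$'' followed by the two-line size computation. The only points to handle with care are the trivial case $d=1$ (where $1/d=1$ makes the orthogonality/unbiasedness dichotomy vacuous and the statement holds directly) and checking that no component degenerates to a single vertex — which is automatic once the computation yields $d_i=d\geq 2$.
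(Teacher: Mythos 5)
Your proposal is correct, and it replaces the paper's final combinatorial step by a genuinely different argument. The paper proceeds spectrally: from the fact that $\tilde K$ is a rank $d^2-1$ projection it reads off the eigenvalues of $A=(d-1)I-d\tilde K$ ($-1$ with multiplicity $d^2-1$, $d-1$ with multiplicity $d+1$), shows via $\langle \vone, A\vone\rangle$ that $\vone$ is a top eigenvector so that $G$ is $(d-1)$-regular, and then uses $\tr(A^3)$ to count triangles and compare with the maximum possible in a $(d-1)$-regular graph, which forces $G$ to split into $d+1$ copies of $K_d$. You instead observe that $A+I=d(I-\tilde K)\succeq 0$, rule out the induced path on three vertices by the $3\times 3$ determinant computation (equivalently: least eigenvalue $\geq -1$ forces a disjoint union of cliques), and then determine the clique sizes from the edge count $\sum_i\binom{d_i}{2}=(d+1)\binom d2$ together with the geometric bound $d_i\le d$ coming from the fact that a clique is a set of pairwise orthogonal unit vectors in $\CC^d$; the identity $\sum_i d_i(d-d_i)=0$ then gives $d_i=d$ for all $i$. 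Both routes rest on the same preceding lemmas; yours avoids the regularity step and the triangle-counting fact, at the price of invoking the ambient dimension once more (to get $d_i\le d$), whereas the paper's argument is purely graph-theoretic once the spectrum of $A$ is known. Your handling of the degenerate case $d=1$ and of the clique-size computation is sound, so there is no gap.
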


\begin{proof}
The eigenvalues of the matrix $A=(d-1)I-d \tilde{K}$, defined above, are $-1$ (with multiplicity $d^2-1$) and $d-1$ (with multiplicity $n-d^2+1=d+1$).
Let $\vone\in \Co^n$ denote the vector with entry 1 in each coordinate, and consider $h=\langle \vone, A\vone\rangle$. Due to the eigenvalues of $A$ we have $h\le (d-1)\langle \vone, \vone \rangle= (d-1)d(d+1),$ with equality only if $\vone$ is an eigenvector with eigenvalue $d-1$. Furthermore, $h$ is the sum of entries in $A$, which equals to twice the number of edges in $G$ (each edge being counted twice by the symmetry of $A$). Therefore $h=|E|=(d-1)d(d+1)$. This implies that $\vone$ is an eigenvector with eigenvalue $d-1$, which means that each vertex in $G$ has degree $d-1$ (in other words, the graph $G$ is $d-1$-regular). 

\medskip

It is also well-known that $\tr (A^3)$ equals to the number of (ordered) triangles present in $G$. By knowing the spectrum of $A$ we can calculate $\tr (A^3)=(-1)(d^2-1)+(d-1)^3(d+1)=(d^2-1)d(d-2)$. We claim that this implies that $G$ can be broken up to the disjoint union of $d+1$ complete graphs with $d$ vertices each. Indeed, the number of (ordered) triangles in a $d-1$ regular graph on $n$ vertices is at most $n(d-1)(d-2)$, because from each vertex we can choose $(d-1)(d-2)$ ordered pairs of edges, and the maximum number of triangles occurs if each of these pairs can be completed by a further edge to make a triangle. This happens if and only if $G$ breaks up to a disjoint union of $d+1$ complete graphs on $d$ vertices. 

\medskip 

In turn, this is equivalent to the vectors $\vb_1, \vb_2, \dots, \vb_n$ forming $d+1$ orthonormal bases, all being pairwise unbiased with respect to each other. 
\end{proof}


\begin{thebibliography}{11}

\bibitem{BBRV}
\textsc{S. Bandyopadhyay, P. O. Boykin, V. Roychowdhury \& F. Vatan},
\newblock{\em A New Proof for the Existence of Mutually Unbiased Bases.}
Algorithmica {\bf 34} (2002), 512–-528.


\bibitem{belovs}
\textsc{A. Belovs \& J. Smotrovs},
\newblock{\em A Criterion for Attaining the Welch Bounds with Applications for Mutually Unbiased Bases.}
Lecture Notes In Computer Science, {\bf 5393},
Mathematical Methods in Computer Science: Essays in Memory of Thomas Beth,
Section: Quantum Computing, (2008), 50 -- 69.

\bibitem{BBELTZ}
\textsc{I. Bengtsson, W. Bruzda, \AA. Ericsson, J.-A. Larsson, W.
Tadej \& K. \.Zyczkowski},
\newblock{\em Mutually unbiased bases and Hadamard matrices of order six.}
J. Math. Phys. {\bf 48} (2007), no. 5, 052106, 21 pp.

\bibitem{boykin}
\textsc{P. O. Boykin, M. Sitharam, P. H.Tiep, P. Wocjan},
\newblock{\em Mutually unbiased bases and orthogonal decompositions of Lie algebras.}
Quantum Inf. Comput.  {\bf 7}  (2007),  no. 4, 371--382.

\bibitem{config}
\textsc{S. Brierley \& S. Weigert},
\newblock{\em Maximal sets of mutually unbiased quantum states in dimension six.}
Phys. Rev. A (3)  {\bf 78}  (2008),  no. 4, 042312, 8 pp.

\bibitem{spectra}
\textsc{L. Collatz, U. Sinogowitz}:
\newblock{\em Spektren endlicher Grafen.}
Abh. Math. Sem. Univ. Hamburg {\bf 21} (1957), 63--77

\bibitem{sum}
\textsc{T. Durt, B. G. Englert, I. Bengtsson, K. Zyczkowski},
\newblock{On mutually unbiased bases.}
International Journal of Quantum Information, {\bf 8}, (2010), 535–640.


\bibitem{ivanovic}
\textsc{I. D. Ivanovic},
\newblock{Geometrical description of quantal state determination},
J. Phys. A, {\bf 14} (1981), 3241.

\bibitem{arxiv}
\textsc{P. Jaming, M. Matolcsi, P. M\'ora, F. Sz\"oll\H{o}si, M.
Weiner},
\newblock{\em A generalized Pauli problem and an infinite family of MUB-triplets in dimension
6.} J. Physics A: Mathematical and Theoretical, Vol. 42, Number
24, 245305, 2009.

\bibitem{KR}
\textsc{A. Klappenecker \& M. R\"otteler},
\newblock{\em Constructions of Mutually Unbiased Bases.}
Finite fields and applications,  137--144, Lecture Notes In Computer Science {\bf 2948}, Springer, Berlin, 2004.

\bibitem{mubfourier}
\textsc{M. Matolcsi},
\newblock{\em A Fourier analytic approach to the problem of mutually unbiased bases.}
Stud. Sci. Math. Hung., {\bf 49} (2012),  482--491.

\bibitem{hal}
\textsc{Planat, M., Rosu, H.C., Perrine, S.}
\newblock{A Survey of Finite Algebraic Geometrical Structures Underlying Mutually Unbiased Quantum Measurements.}
Found. Phys. {\bf 36}, 1662--1680 (2006).

\bibitem{WF}
\textsc{W. K. Wootters \& B. D. Fields},
\newblock{\em Optimal state-determination by mutually unbiased measurements.}
Ann. Physics {\bf 191} (1989), 363--381.

\end{thebibliography}
\end{document}